    \newcommand*{\ie}{i.e.,\@\xspace}
    \newcommand*{\eg}{e.g.,\@\xspace}
    \newcommand*{\cf}{c.f.\@\xspace}
    \newcommand{\figref}[1]{\hyperref[#1]{Fig.\ \ref*{#1}}}
    \newcommand{\tabref}[1]{\hyperref[#1]{Tab.\ \ref*{#1}}}
    \newcommand{\secref}[1]{\hyperref[#1]{Sec.\ \ref*{#1}}}
    \newcommand{\probref}[1]{\hyperref[#1]{Problem\ \ref*{#1}}}
    \newcommand{\theoref}[1]{\hyperref[#1]{Th.\ \ref*{#1}}}
    \newcommand{\lemmaref}[1]{\hyperref[#1]{Lemma\ \ref*{#1}}}
    \newcommand{\assref}[1]{\hyperref[#1]{Assumption\ \ref*{#1}}}
    \newcommand{\defref}[1]{\hyperref[#1]{Def.\ \ref*{#1}}}
    \newcommand{\propref}[1]{\hyperref[#1]{Proposition\ \ref*{#1}}}
    \newcommand{\reqref}[1]{\hyperref[#1]{Requirements\ \ref*{#1}}}
    \renewcommand{\algref}[1]{\hyperref[#1]{Alg.\ \ref*{#1}}}
    \newcommand{\remref}[1]{\hyperref[#1]{Remark\ \ref*{#1}}}
    \newcommand*{\fun}[1]{\ensuremath{\mathantt{#1}}}
    \newcommand*{\set}[1]{\ensuremath{\mathcal{#1}}}
    \newcommand*{\R}{\mathbb{R}}
    \DeclareMathOperator{\defi}{\coloneqq}
    \DeclareMathOperator{\rotdefi}{\eqqcolon}
	\newcommand*{\tn}[1]{\textnormal{#1}}
	\newcommand*{\mrm}[1]{\ensuremath{\mathrm{#1}}}
    \newcommand*{\grass}[2]{\ensuremath{\mathcal{GR}(#1,#2)}}
    \newcommand*{\stiefel}[2]{\ensuremath{\mathcal{ST}(#1,#2)}}
    \newcommand*{\og}[1]{\ensuremath{\mrm{O}({#1})}}
    \newcommand*{\eqclass}[1]{\ensuremath{ [#1] }}
    \DeclareMathOperator{\conv}{conv}
    \DeclareMathOperator{\Span}{span}
	\newcommand*{\valunit}[2]{\ensuremath{#1\,\mathrm{#2}}}
    \newcommand*{\initstate}[0]{\ensuremath{ x_\mrm{s} }}
    \tikzset{%
        block/.style    = {draw, thick, rectangle, minimum height = 2em,
    minimum width = 3em},
        blockSmall/.style    = {draw, thick, rectangle, minimum height = 1.25em,
    minimum width = 1.75em},
        sum/.style      = {draw, circle, node distance = 2cm}, 
    }
    \definecolor{CCPSturquoise1}{RGB}{49,133,156}
    \definecolor{CCPSturquoise2}{RGB}{219,238,244}
    \definecolor{CCPSblue1}{RGB}{37,64,97}
    \definecolor{CCPSblue2}{RGB}{56,96,146}
    \definecolor{CCPSblue3}{RGB}{220,230,242}
    \definecolor{CCPSgreen1}{RGB}{67,80,58}
    \definecolor{CCPSgreen2}{RGB}{170,171,128}
    \definecolor{CCPSgreen3}{RGB}{215,228,189}
    \definecolor{CCPSred1}{RGB}{99,37,35}
    \definecolor{CCPSred2}{RGB}{149,55,53}
    \definecolor{CCPSred3}{RGB}{242,220,219}
    \definecolor{CCPSredBound}{rgb}{0.75,0,0}
    \definecolor{CCPSgray1}{RGB}{127,127,127}
    \definecolor{CCPSgray2}{RGB}{191,191,191}
    \definecolor{CCPSgray23}{RGB}{217,217,217}
    \definecolor{CCPSgray3}{RGB}{242,242,242}
    \definecolor{MPItext}{RGB}{56,60,60}
\DeclareMathAlphabet{\mathantt}{OT1}{antt}{li}{it}
\DeclareMathAlphabet{\mathpzc}{OT1}{pzc}{m}{it}
\newcommand{\StatexIndent}[1][3]{%
	\setlength\@tempdima{\algorithmicindent}%
	\Statex\hskip\dimexpr#1\@tempdima\relax}
\title{\LARGE \bf
Geometric Data-Driven Dimensionality Reduction in MPC with Guarantees
}
\author{
    \authorblockN{
        Roland Schurig$^{1}$,
        Andreas Himmel$^{1}$,
        Rolf Findeisen$^{1}$%
    }
    \thanks{$^{1}$
        Technical University of Darmstadt, Control and Cyber-Physical Systems Laboratory, 64289 Darmstadt, Germany,  \{roland.schurig, andreas.himmel, rolf.findeisen\}@iat.tu-darmstadt.de
    }%
    \thanks{The authors acknowledge support within the frame of the LOEWE initiative emergenCITY.
    }%
}
\newtheorem{theorem}{Theorem}
\newtheorem{lemma}{Lemma}
\newtheorem{corollary}{Corollary}
\theoremstyle{definition}
\newtheorem{definition}{Definition}
\newtheorem{problem}{Problem}
\newtheorem{assumption}{Assumption}
\newtheorem{proposition}{Proposition}
\theoremstyle{remark}
\newtheorem{remark}{Remark}
\newtheorem{example}{Example}
\begin{document}

\maketitle
\thispagestyle{empty}
\pagestyle{empty}

\begin{abstract}

We address the challenge of dimension reduction in the discrete-time optimal control problem which is solved repeatedly online within the framework of model predictive control.
Our study demonstrates that a reduced-order approach, aimed at identifying a suboptimal solution within a low-dimensional subspace, retains the stability and recursive feasibility characteristics of the original problem. 
We present a necessary and sufficient condition for ensuring initial feasibility, which is seamlessly integrated into the subspace design process. 
Additionally, we employ techniques from optimization on Riemannian manifolds to develop a subspace that efficiently represents a collection of pre-specified high-dimensional data points, all while adhering to the initial admissibility constraint.
\end{abstract}

\section{INTRODUCTION}

In model predictive control (MPC) \cite{Findeisen2002, Gruene2017,Borrelli2017}, the control input is computed as the solution of an optimal control problem.
The optimization must be performed online and in real time at every sampling instance of the control system. 
This limits its applicability for systems with small sampling times and/or limited hardware capacities.

Existing work to counteract this problem aims at alleviating the computational burden of solving the optimal control problem.
A prominent idea is to approximate the problem in a lower-dimensional subspace.
The design of the subspace can range from conceptually simple approaches like \emph{move blocking} \cite{Cagienard2007, Shekhar2015} to more elaborated algorithms that involve \emph{singular value decomposition} \cite{Pan2023, Bemporad2020, Rojas2004}.

Many of these ideas share the paradigm that they are \emph{data-driven}, \ie the high-dimensional problem has to be sampled and the subspace is designed based on this data to capture its the most important patterns. 
Furthermore, some results for the stability of such schemes exist, \eg \cite{Cagienard2007, Pan2023}.

The goal of this contribution is to recall, collect and unify existing results for subspace-based dimensionality reduction in MPC.
We carefully review the dimensionality reduction problem in the linear MPC context and present a general scheme, based on \cite{Pan2023, Bemporad2020, Goebel2017}, which enjoys stability and feasibility guarantees.
We provide self-contained proofs within our setting and notation, and add an upper bound for the closed-loop cost.
Then, we relate the scheme to the geometric framework for dimensionality reduction introduced in \cite{Schurig2023}.

We put special focus on the aspect of \emph{initial} feasibility. 
We build upon that and add necessary and sufficient conditions for initial feasibility in a given initial set and translate them into our geometric framework.
While \cite{Schurig2023} mainly focused on the geometry of the subspace design, we also relate it lengthily to the MPC formulation.

We firmly believe that the geometric framework has numerous advantages, since it exploits the geometry of the subspace approximation and allows for a conceptually conclusive problem formulation. 
The growing number of methods and algorithms for solving (constrained) optimization problems over Riemannian manifolds pave the way for our formulation \cite{Absil2008, Boumal2023, Liu2020}.

\paragraph*{Notation}
To denote nonnegative real numbers, we use $\R_{\geq 0} \defi \{ x \in \R \mid x \geq 0 \}$. 
The positive integers are $\mathbb{N}$, and $\mathbb{N}_0 \defi \mathbb{N} \cup \{ 0 \}$. 
We use the standard notion of class $\mathscr{K}$ and $\mathscr{K}_\infty$ comparison functions \cite[Def.~2.13]{Gruene2017}.

\section{MPC FORMULATION}
\label{sec:mpc}

Consider the problem of steering the discrete-time linear time-invariant system \\[-3.5ex]
\begin{align}
	x(t+1) &= A x(t) + B u(t), &
	x(0) &= \initstate,
	\label{eq:plant}
\end{align} \\[-3.5ex]
to the origin, with the pair $(A,B)$ stabilizable, while obeying the constraints $x(t) \in \mathbb{X} \subseteq \R^n$, $u(t) \in \mathbb{U} \subseteq \R^m$, $\forall t \in \mathbb{N}_0$, where $\mathbb{X}$ and $\mathbb{U}$ are compact polyhedral sets containing the origin in their interior.

A suitable method to tackle this problem is MPC \cite{Findeisen2002, Gruene2017,Borrelli2017}. 
It relies on repeatedly solving an open-loop optimal control problem over a finite prediction horizon $N \in \mathbb{N}$.

In MPC, a \emph{stage cost} $\ell \colon \R^n \times \R^m \to \R_{\geq 0}, \, (x,u) \mapsto \ell(x,u)$ that penalizes the distance of the pair $(x,u)$ from the origin is considered.
Then, at each sampling time, MPC predicts $N$ time steps into the future and chooses a control sequence that minimizes the summed stage cost over the horizon. 
The first part of the control sequence is applied to the plant, and the procedure is repeated at the next time step. 
We assume that the stage cost $\ell$ is quadratic with $\ell(x,u) \defi x^\top Q x + u^\top R u$. 
The weighting matrices satisfy $Q \succ 0$ and $R \succ 0$. \\[-3ex] 
\begin{definition}
	We identify the set of input sequences over the horizon $N$ with $\R^d$, where $d \defi Nm$.
	Therefore, we call $z \in \R^d$ an \emph{input sequence} of length $N$.
	Given an input sequence $z = (z_0,\dots,z_{N-1}) \in \R^d$, we use $k \mapsto \fun{pr}_k(z) \defi z_k \in \R^m$
	for $k \in \{ 0, \dots, N-1 \}$ to select its $k^\mrm{th}$ component. 
\end{definition}
It is well-known that numerical problems and ill-conditioning in representing open-loop trajectories over a (long) horizon $N$ arise if the plant \eqref{eq:plant} is open-loop unstable.
Often, this problem is avoided by \emph{pre-stabilizing} the system with a linear controller 
$x \mapsto \kappa(x) \defi K x$
for some stabilizing feedback gain $K \in \R^{m \times n}$ \cite{Cagienard2007, Pan2023, Rojas2004}.

\begin{definition}
    Given an initial value $x_0 \in \R^n$ and an input sequence $z \in \R^d$, we define the pre-stabilized \emph{trajectory} $\fun{x}_z(\cdot, x_0)$ iteratively via $\fun{x}_z(0, x_0) \defi x_0$ and
	\begin{align*}
		\fun{x}_z(k+1, x_0) &\defi A \fun{x}_z(k, x_0) + B \left( \kappa(\fun{x}_z(k, x_0)) + \fun{pr}_k(z) \right)
	\end{align*}
	for $k \in \{ 0, \dots, N-1 \}$.
	We refer to
	\begin{align*}
		\fun{u}_z(k,x_0) &\defi \kappa(\fun{x}_z(k, x_0)) + \fun{pr}_k(z), &
		k &\in \{0,\dots,N-1\}
	\end{align*}
	as the corresponding \emph{control sequence} that produces the trajectory $\fun{x}_z(\cdot,x_0)$ according to the system dynamics \eqref{eq:plant}.
\end{definition}

We further introduce the following standard terminal ingredients that are frequently used in MPC to establish formal stability guarantees \cite{Gruene2017,Borrelli2017}.
\begin{assumption}
    \label{ass:terminal_ingredients}
	There exists a closed polyhedral terminal set $\mathbb{X}_\mrm{f} \subseteq \mathbb{X}$ containing the origin in its interior, a quadratic terminal penalty $V_\mrm{f} \colon \mathbb{X}_\mrm{f} \to \R_{\geq 0}$ and a terminal controller $\kappa_\mrm{f} \colon \mathbb{X}_\mrm{f} \to \mathbb{U}$ that satisfy
	$
		A x + B \kappa_\mrm{f}(x) \in \mathbb{X}_\mrm{f}
	$
	and
	$
		V_\mrm{f}(x) - V_\mrm{f} ( A x + B \kappa_\mrm{f}(x) ) \geq \ell(x,\kappa_\mrm{f}(x))
	$
	for all $x \in \mathbb{X}_\mrm{f}$.
\end{assumption}
\begin{definition}
	For a state $x \in \mathbb{X}$, we call the input sequence $z \in \R^d$ \emph{admissible} if
	\begin{equation*}
		\forall k \in \{0,\dots,N-1\} \colon (\fun{x}_z(k,x), \fun{u}_z(k,x) ) \in \mathbb{X} \times \mathbb{U} 
	\end{equation*}
	and $\fun{x}_z(N,x) \in \mathbb{X}_\mrm{f}$ holds \cite{Gruene2017}.
	The set of admissible input sequences for $x$ is denoted by $\mathbb{U}^N(x)$.
	
	Additionally, we introduce the \emph{feasible set}
	$
		\set{X}_N = \{ x \in \mathbb{X} \mid \mathbb{U}^N(x) \neq \emptyset \}
	$
	as the set of states for which we can find at least one admissible input sequence.
\end{definition}
Finally, for $x \in \R^n$ and $z \in \R^d$, we can introduce the cost function over the horizon $N$ as
$
    J_N(x,z) \defi V_\mrm{f}(\fun{x}_z(N,x)) + \sum_{k=0}^{N-1} \ell(\fun{x}_z(k,x),\fun{u}_z(k,x))
$.
Then, we define the \emph{constrained finite time optimal control problem}
\begin{align}
    \mathscr{P}_N(x) \colon &\min_{z \in \R^d} \; J_N(x,z) &
    &\mrm{s.t.} \; \, z \in \mathbb{U}^N(x)
    \label{eq:CFTOC} 
\end{align}
on which the standard MPC scheme is based.
We introduce the following objects to describe $\mathscr{P}_N(x)$:
	The \emph{optimal value function} $V_N \colon \set{X}_N \to \R_{\geq 0}$, which maps any $x \in \set{X}_N$ to the corresponding optimal value of \eqref{eq:CFTOC}, \ie $V_N(x) \leq J_N(x,z)$ holds for all $z \in \mathbb{U}^N(x)$.
	The \emph{optimizer mapping} $\mu_N \colon \set{X}_N \to \R^d$, which maps any $x \in \set{X}_N$ to the corresponding optimal input sequence of \eqref{eq:CFTOC}, \ie $V_N(x) = J_N(x,\mu_N(x))$.

The MPC controller asymptotically stabilizes \eqref{eq:plant} on the feasible set $\set{X}_N$ under the assumptions made. 
For more details, we refer to, \eg \cite[Th.~5.13]{Gruene2017}. 
Because the region of attraction is $\set{X}_N$, we introduce an additional assumption on the initial state of the plant.
\begin{assumption}
	\label{ass:initial_value}
	The initial value $\initstate$ of \eqref{eq:plant} is contained in the compact set $\mathbb{X}_0 = \conv(X_0) \subseteq \set{X}_N$, where ${X}_0 \defi \{ \bar{x}_1, \dots, \bar{x}_s \}$, $\bar{x}_j \in \set{X}_N$ and $\conv$ denotes the convex hull of a set.
\end{assumption}

\begin{remark}
    Problem \eqref{eq:CFTOC} is a \emph{multiparametric quadratic program} with parameter $x$ and optimization variable $z$ \cite[Sec.~6.3]{Borrelli2017}.
    The admissible sets $\mathbb{U}^N(x)$ are bounded polyhedra.
\end{remark}

\section{DIMENSIONALITY REDUCTION IN MPC}
\label{sec:reduced_mpc}

We aim to simplify optimization problem \eqref{eq:CFTOC}, which must be solved at every sampling instance of an MPC scheme.
A classical approach to diminish the computational burden in $\mathscr{P}_N(x)$ is to reduce the number of (scalar) decision variables from $d$ to $r$, with $r \ll d$; see, \eg \cite{Cagienard2007,Shekhar2015,Pan2023, Bemporad2020,Schurig2023}. 

One of the most established techniques and a cornerstone in statistics for dimensionality reduction and identifying patterns in data is \emph{principal component analysis} (PCA); see, \eg \cite{Boumal2023,Brunton2019}. 
PCA offers a systematic way to uncover the $r$ \emph{dominant directions} of variation of the data \cite{Boumal2023}. 
The underlying assumption is that the data is located on or near a $r$-dimensional affine subspace in $\R^d$. 
By transforming to this new coordinate system, a low-dimensional approximation of the high-dimensional data is available \cite{Brunton2019}.

Motivated by PCA, we aim for a {data-driven} design of suitable subspaces in the context of MPC.
We consider a data set of the form
\begin{equation}
	\set{D} = \{ (x_i, z_i) \}_{i=1}^L \subset \set{X}_N \times \R^d,
	\label{eq:systemData}
\end{equation}
where $z_i \defi \mu_N(x_i)$, \ie our data consists of pairs of states \emph{and} corresponding optimal input sequences.
We seek to find an affine subspace -- represented by a \emph{basis }and an \emph{offset} -- that best captures the dominant directions of variation in $\set{D}$. 

The \emph{compact Stiefel manifold}
$$
	\stiefel{r}{d} \defi \{ U \in \R^{d \times r} \mid U^\top U = I_r \} 
$$
offers a convenient way to represent a basis, since the columns of $U \in \stiefel{r}{d}$ form an orthonormal basis of an $r$-dimensional subspace of $\R^d$.
We refer to \cite{Edelman1998} for an introduction to the Stiefel manifold.

In order to capture the state information in $\mathcal{D}$, we allow for a \emph{state-dependent} offset
$
	\sigma \colon \set{X}_N \to \R^d 
$,
\ie we seek to identify a basis $U \in \stiefel{r}{d}$ whose column space best approximates $z_i - \sigma(x_i)$.
We restrict the class of possible offsets to \\[-2ex]
$$
    \set{F}_N \defi \{ \sigma \colon \set{X}_N \to \R^d \mid \sigma \colon x \mapsto \Gamma x + \xi \}.
$$ \\[-3ex] 
This has desirable properties when it comes to guaranteeing feasibility in the MPC context and combines ideas from previous works \cite{Bemporad2020, Goebel2017}.
For a discussion on how to obtain $(U,\sigma) \in \stiefel{r}{d} \times \set{F}_N$, we refer to \secref{sec:data_driven_design}.

Thus, for a given pair $(U,\sigma)$, an intuitive, low-dimensional reformulation of \eqref{eq:CFTOC} is to replace the optimization variable $z$ with $U \alpha + \sigma(x)$.
Note that this reduces the number of scalar decision variables from $d$ to $r$.
However, this intuitive reformulation does not yet fulfill the requirements we have if the optimization problem is to be solved in closed-loop in a receding horizon fashion. 
In particular, we must put a special focus on the issue of \emph{feasibility}.
More precisely, we shall discuss if and when for all states $x \in \set{X}_N$ there exists an $\alpha_x \in \R^r$ such that $U \alpha_x + \sigma(x) \in \mathbb{U}^N(x)$ holds.

In general, the answer to this question depends on how the pair $(U,\sigma)$ is chosen. 
However, a common approach in reduced MPC to circumvent this problem is to hand an admissible guess $\tilde{z} \in \mathbb{U}^N(x)$ -- however obtained -- to the solver as a \qq{fall-back} option \cite{Bemporad2020}.

Therefore, given $x \in \set{X}_N$ and $\tilde{z} \in \R^d$, we introduce the \emph{reduced-order} MPC problem
\begin{mini}
	{(\alpha,\tau)}{ J_N(x, U \alpha + \tau \sigma(x) + (1-\tau) \tilde{z} ) }
	{\label{eq:CFTOC_reduced}}{\widetilde{\mathscr{P}}_N(x,\tilde{z}) \colon}
	\addConstraint{U \alpha + \tau \sigma(x) + (1-\tau) \tilde{z}}{\in \mathbb{U}^N(x)}
\end{mini}
to replace $\mathscr{P}_N(x)$, with $\tau \in \R$. 
Note that the optimization problem accepts an additional input $\tilde{z}$ and we have added an extra degree of freedom in $\tau$.

\begin{remark}
	If $\tilde{z} \in \mathbb{U}^N(x)$, then $\widetilde{\mathscr{P}}_N(x,\tilde{z})$ is guaranteed to have the (suboptimal) solution $\tilde{\alpha}=0$ and $\tilde{\tau} = 0$.
	Thus, we have introduced $\tau$ to ensure that in this case the solver can always return an admissible input sequence by scaling the influence of the basis and the offset to zero.
\end{remark}

Similar to the full-order case, we use $\widetilde{V}_N(x,\tilde{z})$ to denote the resulting optimal value, which depends on the choice of $(U,\sigma)$. 
However, to not overload notation and maintain readability, we suppress this dependence here.

Suppose that the optimizer of $\widetilde{\mathscr{P}}_N(x,\tilde{z})$ is $(\alpha^\star, \tau^\star)$. 
Then, the corresponding optimal input sequence is given by 
$
	\tilde{\mu}_N(x,\tilde{z}) \defi U \alpha^\star + \tau^\star \sigma(x) + (1-\tau^\star) \tilde{z} ,
$
\ie $\widetilde{V}_N(x,\tilde{z}) = J_N(x,\tilde{\mu}_N(x,\tilde{z}))$.

We now shift our focus to the question of how admissible guesses $\tilde{z} \in \mathbb{U}^N(x)$ can be obtained. 
In MPC with terminal conditions, a common approach to construct an admissible solution at time $t \in \mathbb{N}$ \emph{in closed-loop} is to consider the admissible optimal solution from the \emph{last} time step $t-1$, \cf \cite{Cagienard2007,Shekhar2015,Pan2023,Bemporad2020}. 
The procedure is based on the terminal ingredients in \assref{ass:terminal_ingredients} and relies on standard arguments.
\begin{proposition}
    \label{prop:admissible_shift}
	Given a state $x \in \set{X}_N$ and an input sequence $z \in \mathbb{U}^N(x)$, we create an input sequence $\fun{s}_\mrm{f}(x,z) \in \mathbb{U}^N(\fun{x}_z(1,x))$. 
    If $\fun{x}_z(1,x) = 0$, then we set $\fun{s}_\mrm{f}(x,z) \defi 0$.
    Otherwise, we define
	\begin{equation*}
		\fun{pr}_k(\fun{s}_\mrm{f}(x,z)) \defi \begin{cases*}
			\fun{pr}_{k+1}(z), & $k \in \{0,\dots,N-2\}$ \\
			\kappa_\mrm{f}( x_{N} ) - \kappa( x_{N} ), & $k = N-1$
		\end{cases*}
	\end{equation*}
    with $x_{N} = \fun{x}_z(N,x)$.
    We call $\fun{s}_\mrm{f}(x,z)$ the \emph{admissibly shifted input sequence} for $x$ and $z$. 
\end{proposition}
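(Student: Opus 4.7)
The plan is to show that $\fun{s}_\mrm{f}(x,z)$, when applied from the successor state $y_0 \defi \fun{x}_z(1,x)$, reproduces the tail of the original trajectory for the first $N-1$ steps and then, via the terminal controller $\kappa_\mrm{f}$, extends one step further into $\mathbb{X}_\mrm{f}$. The construction is arranged so that the pre-stabilizing feedback $\kappa$ cancels on the overlap indices, while the mismatch $\kappa_\mrm{f} - \kappa$ in the last slot injects exactly the terminal controller at the terminal state.

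The degenerate case $y_0 = 0$ is handled first: since the pre-stabilized dynamics with $\fun{s}_\mrm{f}(x,z) = 0$ preserve the equilibrium, the resulting trajectory stays at the origin, which satisfies every state, input, and terminal constraint because $0$ lies in the interior of $\mathbb{X}$ and $\mathbb{U}$ and in $\mathbb{X}_\mrm{f}$.

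For the generic case $y_0 \neq 0$, I would prove by induction on $k$ that $\fun{x}_{\fun{s}_\mrm{f}(x,z)}(k, y_0) = \fun{x}_z(k+1, x)$ for $k \in \{0,\dots,N-1\}$. The base case is the definition of $y_0$; in the inductive step the $\kappa(\cdot)$ contributions on both sides cancel, and the identity $\fun{pr}_k(\fun{s}_\mrm{f}(x,z)) = \fun{pr}_{k+1}(z)$ for $k \leq N-2$ closes the recursion. Consequently, $\fun{u}_{\fun{s}_\mrm{f}(x,z)}(k, y_0) = \fun{u}_z(k+1, x) \in \mathbb{U}$ for $k \in \{0,\dots,N-2\}$, and the corresponding states lie in $\mathbb{X}$; both inclusions are inherited from the admissibility of $z$ for $x$.

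The remaining step is the terminal one. Setting $x_N \defi \fun{x}_z(N,x) \in \mathbb{X}_\mrm{f}$, the shifted state at index $N-1$ equals $x_N$, and the applied control simplifies to $\kappa(x_N) + (\kappa_\mrm{f}(x_N) - \kappa(x_N)) = \kappa_\mrm{f}(x_N) \in \mathbb{U}$ by \assref{ass:terminal_ingredients}; the same assumption then yields $A x_N + B \kappa_\mrm{f}(x_N) \in \mathbb{X}_\mrm{f}$, establishing the terminal constraint. The main obstacle is purely bookkeeping: keeping the index shift straight and handling the $\kappa$-cancellation cleanly. No idea beyond the standard shifted-sequence argument, adapted to the pre-stabilized parametrization, is required.
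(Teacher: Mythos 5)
Your proposal is correct and follows exactly the standard shifted-sequence argument that the paper's proof defers to (it simply cites \assref{ass:terminal_ingredients} and \cite[Lemma~5.10]{Gruene2017}); your induction establishing $\fun{x}_{\fun{s}_\mrm{f}(x,z)}(k,\fun{x}_z(1,x)) = \fun{x}_z(k+1,x)$, the cancellation of the pre-stabilizing feedback on the overlap, and the terminal step via $\kappa_\mrm{f}$ are precisely the details being invoked. No gaps.
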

\begin{proof}
    We show that the inclusion $\fun{s}_\mrm{f}(x,z) \in \mathbb{U}^N(\fun{x}_z(1,x))$ is true. 
    For $\fun{x}_z(1,x) = 0$, it trivially holds. 
    For $\fun{x}_z(1,x) \neq 0$, the inclusion is implied by construction of $\fun{s}_\mrm{f}$ that exploits the properties in \assref{ass:terminal_ingredients}; see, \eg \cite[Lemma~5.10]{Gruene2017}.
\end{proof}
Therefore, if $x \in \set{X}_N$ and $z^\star \in  \mathbb{U}^N(x)$ denote the state and the optimal input sequence at time $t-1 \in \mathbb{N}$, we can hand $\tilde{z}^+ = \fun{s}_\mrm{f}(x,z^\star)$ to \eqref{eq:CFTOC_reduced} as the admissible guess for the state $x^+ = \fun{x}_{z^\star}(1,x)$ at time $t$. 

With this reasoning, however, we \emph{cannot} rely on a optimal solution from the last time step at time $t=0$.
From \assref{ass:initial_value} we know that the initial state $\initstate$ at time $t=0$ is in $\mathbb{X}_0$.
Consequently, we must establish the existence of admissible input sequences in \eqref{eq:CFTOC_reduced} for all $\initstate \in \mathbb{X}_0$ by just relying on $U$ and $\sigma$.

We shift the discussion on how this can be incorporated into the design of $U$ and $\sigma$ to \secref{sec:data_driven_design}. 
For now, we content ourselves with the following definition.
\begin{definition}
	\label{def:initial_feasibility}
	We call the pair $(U,\sigma)$ given by a basis $U \in \stiefel{r}{d}$ and an offset $\sigma \in \set{F}_N$ \emph{initially admissible} (IA) if 
	\begin{equation*}
		\forall \initstate \in \mathbb{X}_0 \colon \exists {\alpha} \in \R^r \colon  U {\alpha} + \sigma(\initstate) \in \mathbb{U}^N(\initstate) .
	\end{equation*}
\end{definition}

Throughout the remainder of the section, we assume that the pair $(U,\sigma)$ is initially admissible. 
This notion allows the introduction of \algref{alg:reducedRHC} as our reduced-order MPC scheme. 
It is based on the reduced problem \eqref{eq:CFTOC_reduced} and the admissible shift $\fun{s}_\mrm{f}$. 
In the following, we analyze the closed-loop system resulting from controlling \eqref{eq:plant} with \algref{alg:reducedRHC}.

\begin{algorithm}[t]
	\caption{Reduced-order MPC scheme.}\label{alg:reducedRHC}
	\begin{algorithmic}[1]
		\Require Initially admissible pair $(U,\sigma) \in \stiefel{r}{d} \times \set{F}_N$.
		\State Initialize $\tilde{z} \gets 0$.
		\Repeat{ At each sampling time $t \in \mathbb{N}_0$ ...}
		\State Measure the system state $x(t) \in \mathbb{X}$ and set $x \defi x(t)$.
		\State Solve $\widetilde{\mathscr{P}}_N(x,\tilde{z})$ in \eqref{eq:CFTOC_reduced} with $(U,\sigma)$ and denote the \StatexIndent[1] 
		resulting input sequence by $z^\star \defi \tilde{\mu}_N(x,\tilde{z})$. 
        \State Update $\tilde{z} \gets \fun{s}_\mrm{f}(x,z^\star)$. 
		\State Define the reduced-order MPC feedback \StatexIndent[1] $\tilde{\kappa}_N(x,\tilde{z}) \defi \fun{u}_{z^\star}(0,x)$ and apply it to the system.
		\Until{system state has converged to the origin;}
	\end{algorithmic}
\end{algorithm}

The first important point to recognize is that the computation of the control value at time $t$ does \emph{not} only depend on the current state $x$ of the plant, but also on the admissible guess $\tilde{z}$ that is updated throughout the algorithm based on the solution of \eqref{eq:CFTOC_reduced}. 
This is the case because for $\tilde{z}_1, \tilde{z}_2 \in \mathbb{U}^N(x)$ with $\tilde{z}_1 \neq \tilde{z}_2$, we have $\tilde{\mu}_N(x,\tilde{z}_1) \neq \tilde{\mu}_N(x,\tilde{z}_2)$ in general.
Therefore, the state information $x$ of the plant \eqref{eq:plant} is \emph{not} sufficient to describe what its next state $x^+ = A x + B \tilde{\kappa}_N(x,\tilde{z})$ in closed-loop will be; the information $\tilde{z}$ on the current admissible guess is required as well.

For this reason, we work with an extended state, which consists of the plant's state \emph{and} the admissible guess. 
For $\initstate \in \mathbb{X}_0$, we introduce $\mathpzc{x}(0,\initstate) \defi \initstate$, $\mathpzc{z}(0,\initstate) \defi 0$ and
\begin{equation*}
\begin{aligned}
    \mathpzc{x}(t+1,\initstate) &\defi A \mathpzc{x}(t,\initstate) 
    + B \tilde{\kappa}_N(\mathpzc{x}(z,\initstate),\mathpzc{z}(t,\initstate)), \\
    \mathpzc{z}(t+1,\initstate) &\defi \fun{s}_\mrm{f}(\mathpzc{x}(t,\initstate),\tilde{\mu}_N(\mathpzc{x}(t,\initstate),\mathpzc{z}(t,\initstate))) .
\end{aligned}
\end{equation*} 
Here, $\mathpzc{x}(t,\initstate)$ denotes the state of the plant in closed-loop at time $t \in \mathbb{N}_0$ when the initial state at time $t=0$ was $\initstate$, while $\mathpzc{z}(t,z_\mrm{s})$ denotes the admissible guess at the same time. 

We define the state space of the closed-loop dynamics as $M \defi M_1 \cup M_2 \subseteq \set{X}_N \times \R^d$ with
$
    M_1 \defi \{ (\initstate,0) \mid \initstate \in \mathbb{X}_0 \}
$ and
$
    M_2 \defi \{ (x,\tilde{z}) \mid x \in \set{X}_N \setminus \{ 0 \} , \, \tilde{z} \in \mathbb{U}^N(x) \} \cup \{ (0, 0) \}
$. 
\begin{remark}
    This somewhat overelaborate definition serves two purposes. 
    On the one hand, it explicitly gives prominence to the nature of the initial extended state in \algref{alg:reducedRHC}.
    On the other hand, it highlights the fact that by construction of $\fun{s}_\mrm{f}$, once the plant state $x$ reaches the equilibrium, it stays there. (We show that in our next result.)
    We choose this definition to stress that (i) the solution of \eqref{eq:CFTOC_reduced} is special at time $t=0$ in the sense that feasibility solely originates from $(U,\sigma)$ and (ii) the entire scheme only works under the initial admissibility assumption on $(U,\sigma)$. 
    For subsequent times, feasibility of \eqref{eq:CFTOC_reduced} will stem from \propref{prop:admissible_shift}.
\end{remark}

We introduce the map
$
    g \colon M \to \R^n \times \R^d, \, p = (x,\tilde{z}) \mapsto g(p) = (x^+,\tilde{z}^+) 
$,
with $x^+ \defi A x + B \tilde{\kappa}_N(x,\tilde{z})$ and $\tilde{z}^+ \defi \fun{s}_\mrm{f}(x,\tilde{\mu}_N(x,\tilde{z}))$, and establish some basic properties.

\begin{lemma}
    The value $g(p)$ exists for each $p \in M$ and the inclusion $g(p) \in M_2$ holds for each $p \in M$. 
    \label{lemma:closed_loop_existence}
\end{lemma}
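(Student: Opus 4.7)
The plan is a case split on the structure of $M$. For $p \in M_1$, the initial admissibility of $(U,\sigma)$ will furnish a feasible point for the reduced problem; for $p \in M_2$, the admissible guess $\tilde{z}$ already serves as one. In either case, I would first argue that a minimizer of $\widetilde{\mathscr{P}}_N(x,\tilde{z})$ exists, so that the optimal sequence $z^\star \defi \tilde{\mu}_N(x,\tilde{z}) \in \mathbb{U}^N(x)$ is well defined, and then apply \propref{prop:admissible_shift} to conclude that $\tilde{z}^+ = \fun{s}_\mrm{f}(x,z^\star) \in \mathbb{U}^N(x^+)$. This will automatically place $g(p)$ into $M_2$, irrespective of whether $x^+$ equals zero or not.

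For $p = (\initstate, 0) \in M_1$ I would invoke \defref{def:initial_feasibility} to obtain an $\alpha \in \R^r$ with $U\alpha + \sigma(\initstate) \in \mathbb{U}^N(\initstate)$; then the pair $(\alpha, 1)$ is feasible in $\widetilde{\mathscr{P}}_N(\initstate, 0)$. For $p = (x,\tilde{z}) \in M_2$ with $x \neq 0$, the choice $(\alpha,\tau) = (0,0)$ maps to the admissible guess $\tilde{z} \in \mathbb{U}^N(x)$ and is therefore feasible. Existence of a minimizer in both situations then follows from a standard argument: the affine map $(\alpha,\tau) \mapsto U\alpha + \tau \sigma(x) + (1-\tau)\tilde{z}$ sends $(\alpha,\tau)$-space into an affine subspace of $\R^d$, its intersection with the compact polyhedron $\mathbb{U}^N(x)$ is a compact polyhedron, and the continuous quadratic objective $J_N(x,\cdot)$ attains its minimum on this compact set. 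For the isolated point $p = (0,0) \in M_2$, the pair $(\alpha,\tau) = (0,0)$ yields the admissible value $v = 0 \in \mathbb{U}^N(0)$ with cost zero, which is clearly minimal, whence $z^\star = 0$, $x^+ = 0$, and $\tilde{z}^+ = 0$ by the origin branch of \propref{prop:admissible_shift}.

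The only delicate bookkeeping is to check that $g$ cannot exit $M_2$, either by producing a pair with $x^+ = 0$ but $\tilde{z}^+ \neq 0$, or with $x^+ \neq 0$ but $\tilde{z}^+ \notin \mathbb{U}^N(x^+)$. The first situation is excluded because the definition of $\fun{s}_\mrm{f}$ is hard-wired to return $0$ whenever $\fun{x}_{z^\star}(1,x) = 0$, and the second is precisely the conclusion of \propref{prop:admissible_shift}. Everything else reduces to standard continuity and feasibility arguments that are already packaged in the results cited above, so I do not expect any genuine obstacle beyond keeping the two branches of $M_2$ aligned with the two branches of the definition of $\fun{s}_\mrm{f}$.
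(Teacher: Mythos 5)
Your proposal is correct and follows essentially the same route as the paper: feasibility of $\widetilde{\mathscr{P}}_N$ via initial admissibility on $M_1$ and via the guess $\tilde{z}$ on $M_2$, followed by the same three-case check (the point $(0,0)$, points driven to $x^+=0$, and the rest via \propref{prop:admissible_shift}) to rule out leaving $M_2$. You are merely more explicit than the paper about why a minimizer is attained (compactness of the intersection of the affine image with the bounded polyhedron $\mathbb{U}^N(x)$), which the paper leaves implicit.
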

\begin{proof}
    Assume first that $(x,\tilde{z}) \in M_2$. 
    Then, $\widetilde{\mathscr{P}}_N(x,\tilde{z})$ is feasible by construction, so $\tilde{\mu}_N(x,\tilde{z})$ exists. 
    If $(x,\tilde{z}) \in M_1$, then $\tilde{\mu}_N(x,\tilde{z})$ is guaranteed to exist under the IA assumption on $(U,\sigma)$.
    Thus, we have the existence of $\tilde{\mu}_N(x,\tilde{z}) \in \mathbb{U}^N(x)$ and hence also of $\tilde{\kappa}_N(x,\tilde{z})$ for all $(x,\tilde{z}) \in M$.
    Also, for $(x,\tilde{z}) \in M$, we have $x \in \set{X}_N$, so $\fun{s}_\mrm{f}(x,\tilde{\mu}_N(x,\tilde{z}))$ is well-defined. 
    This establishes the existence of $g(p)$ for all $p \in M$.

    For the second part, we have to show that $g(p)$ is never of the form $(0,\tilde{z})$ with $\tilde{z} \neq 0$, since $(0,\tilde{z})$ implies $\tilde{z} = 0$ by construction of $M_2$.
    We first consider $p = (0,0)$. 
    Then, we have $\tilde{\mu}_N(0,0) = 0$, which implies $g(p) = (0,0) \in M_2$.
    Next, we consider $p = (x,\tilde{z})$ with $A x + B \tilde{\kappa}_N(x,\tilde{z}) = 0$. 
    Then, $\fun{s}_\mrm{f}(x,\tilde{\mu}_N(x,\tilde{z})) = 0$ by defnition of $\fun{s}_\mrm{f}$, which implies $g(p) = (0,0) \in M_2$.
    Finally, for all remaining $p = (x,\tilde{z}) \in M$, we have $g(p) = (x^+,\tilde{z}^+)$ with $x^+ \neq 0$.
    By \propref{prop:admissible_shift}, the inclusion $\tilde{z}^+ \in \mathbb{U}^N(x^+)$ holds, which confirms the claim $g(p) \in M_2$ for all $p \in M$. 
\end{proof}
This result justifies to introduce the difference equation
\begin{align}
    p(t+1) &= g(p(t)), &
	p(0) &= p_\mrm{s} \in M.
    \label{eq:extended_closed_loop}
\end{align} 
We denote the solution of \eqref{eq:extended_closed_loop} with initial state $p_\mrm{s} \in M$ by $\mathpzc{p}(t,p_\mrm{s})$. 
Note that for $p_\mrm{s} = (\initstate,0) \in M_1$, the equality $\mathpzc{p}(t,p_\mrm{s}) = (\mathpzc{x}(t,\initstate),\mathpzc{z}(t,\initstate)$ holds, \ie $\mathpzc{p}(\cdot,p_\mrm{s})$ represents the trajectory resulting from controlling \eqref{eq:plant} with \algref{alg:reducedRHC}.

Hence, we have established that \eqref{eq:extended_closed_loop} is well-defined and that it describes the closed-loop dynamics of our reduced MPC scheme. 
Next, we turn our attention to the closed-loop properties. 
A contribution of this work is to explicitly state an upper bound of the closed-loop cost of our reduced MPC scheme. 
For $p = (x,\tilde{z}) \in M$, we introduce $\tilde{\ell}(p) \defi \ell(x,\tilde{\kappa}_N(x,\tilde{z}))$ as a short-hand notation for the stage cost in closed-loop for the reduced MPC controller.
Using the closed-loop dynamics \eqref{eq:extended_closed_loop} with initial state $p_\mrm{s} = (\initstate,0) \in M_1$, we introduce
$
    \tilde{J}_\mrm{cl}(\initstate) \defi \sum_{t=0}^\infty \tilde{\ell} \left(
	\mathpzc{p}(t,p_\mrm{s}) \right)
$
as the \emph{closed-loop cost} of the reduced MPC controller for $\initstate \in \mathbb{X}_0$.
We first state an intermediate result, \cf \cite{Pan2023,Goebel2017}.
\begin{lemma}
    \label{lemma:lyap_decrease}
    The inequality $\tilde{V}_N(g(p)) \leq \tilde{V}_N(p) - \tilde{\ell}(p)$ holds for all $p = (x,\tilde{z}) \in M$. 
\end{lemma}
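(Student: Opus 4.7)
The plan is to exploit the standard MPC telescoping argument, adapted to the reduced-order formulation by noting that the shifted sequence $\fun{s}_\mrm{f}$ is always a feasible point of the inner problem \eqref{eq:CFTOC_reduced} via the trivial choice $(\alpha,\tau)=(0,0)$.

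Fix $p = (x,\tilde{z}) \in M$. By \lemmaref{lemma:closed_loop_existence}, the optimizer $z^\star \defi \tilde{\mu}_N(x,\tilde{z})$ exists and satisfies $\tilde{V}_N(p) = J_N(x,z^\star)$. Let $(x^+,\tilde{z}^+) = g(p)$ with $\tilde{z}^+ = \fun{s}_\mrm{f}(x,z^\star)$. By \propref{prop:admissible_shift}, $\tilde{z}^+ \in \mathbb{U}^N(x^+)$. I would then observe that evaluating the cost in \eqref{eq:CFTOC_reduced} at the state $x^+$ with guess $\tilde{z}^+$ and decision variables $(\alpha,\tau)=(0,0)$ yields the input sequence $U \cdot 0 + 0 \cdot \sigma(x^+) + 1 \cdot \tilde{z}^+ = \tilde{z}^+$, which is admissible. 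Hence by suboptimality,
\begin{equation*}
    \tilde{V}_N(g(p)) \;\leq\; J_N(x^+,\tilde{z}^+) .
\end{equation*}

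The next step is the classical MPC telescoping: comparing the trajectory produced by $\tilde{z}^+$ from $x^+$ with the tail of the trajectory produced by $z^\star$ from $x$. By construction of $\fun{s}_\mrm{f}$, $\fun{x}_{\tilde{z}^+}(k,x^+) = \fun{x}_{z^\star}(k+1,x)$ and $\fun{u}_{\tilde{z}^+}(k,x^+) = \fun{u}_{z^\star}(k+1,x)$ for $k \in \{0,\dots,N-2\}$, while at the terminal step the controller $\kappa_\mrm{f}$ is applied. Writing out $J_N(x^+,\tilde{z}^+)$, adding and subtracting $V_\mrm{f}(\fun{x}_{z^\star}(N,x))$, and using \assref{ass:terminal_ingredients} to absorb $V_\mrm{f}(Ax_N+B\kappa_\mrm{f}(x_N)) + \ell(x_N,\kappa_\mrm{f}(x_N)) \leq V_\mrm{f}(x_N)$ yields
\begin{equation*}
    J_N(x^+,\tilde{z}^+) \;\leq\; J_N(x,z^\star) - \ell(x,\fun{u}_{z^\star}(0,x)) \;=\; \tilde{V}_N(p) - \tilde{\ell}(p),
\end{equation*}
since $\fun{u}_{z^\star}(0,x) = \tilde{\kappa}_N(x,\tilde{z})$ by definition of the feedback in \algref{alg:reducedRHC}. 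Combining the two inequalities finishes the generic case.

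The only point requiring separate attention is the corner case $\fun{x}_{z^\star}(1,x)=0$, in which \propref{prop:admissible_shift} sets $\tilde{z}^+ = 0$ by convention rather than by the shift formula. Here $x^+ = 0$, $\tilde{\mu}_N(0,0) = 0$, and so $\tilde{V}_N(g(p)) = J_N(0,0) = 0$, while $\tilde{V}_N(p) - \tilde{\ell}(p) \geq 0$ by the same telescoping bound applied with the shift formula (which is itself admissible at $x^+=0$, and gives the same upper bound as above). The main conceptual obstacle is recognizing that the reduced problem does not lose the standard descent property of MPC: the extra degree of freedom $\tau$ and the construction of $M$ are precisely what guarantee that the shifted sequence always lies in the feasible set of \eqref{eq:CFTOC_reduced}, so the proof reduces to the classical argument applied to $J_N$ rather than to the reduced cost directly.
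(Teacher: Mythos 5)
Your proof is correct and follows essentially the same route as the paper: feasibility of the shifted sequence via $(\alpha,\tau)=(0,0)$ gives $\tilde{V}_N(g(p)) \leq J_N(x^+,\tilde{z}^+)$, and the classical telescoping with \assref{ass:terminal_ingredients} yields the decrease. Your explicit treatment of the corner case $\fun{x}_{z^\star}(1,x)=0$, where $\fun{s}_\mrm{f}$ is defined by convention rather than by the shift formula, is in fact slightly more careful than the paper's argument, which passes over it silently.
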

\begin{proof}
    We denote $z^\star = \tilde{\mu}_N(p)$, $u = \tilde{\kappa}_N(p)$, $x_N = \fun{x}_{z^\star}(N,x)$ and $g(p) = (x^+,\tilde{z}^+) \in M$.
    By construction, $x_N \in \mathbb{X}_\mrm{f}$, $\fun{u}_{\tilde{z}^+}(N-1,x^+) = \kappa_\mrm{f}(x_N)$ and $\fun{x}_{\tilde{z}^+}(k,x^+) = \fun{x}_{z^\star}(k+1,x)$ for $k \in \{0,\dots,N-1\}$.
	This implies $J_N(g(p)) = V_\mrm{f}(A x_N + B \kappa_\mrm{f}(x_N)) + \tilde{V}_N(p) - \ell(x,u) - V_\mrm{f}(x_N) + \ell(x_N,\kappa_\mrm{f}(x_N))$. 
	Thus, with \assref{ass:terminal_ingredients} and $\tilde{V}_N(g(p)) \leq J_N(g(p))$, we obtain $\tilde{V}_N(g(p)) \leq \tilde{V}_N(p) - \ell(x,u)$. 
\end{proof}
\begin{theorem}
    The closed-loop cost $\tilde{J}_\mrm{cl}(\initstate)$ is upper bounded by $\widetilde{V}_N(\initstate,0)$ for all $\initstate \in \mathbb{X}_0$.	
    \label{theo:closed_loop_upper_bound}
\end{theorem}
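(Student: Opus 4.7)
The plan is to use \lemmaref{lemma:lyap_decrease} as a Lyapunov-type one-step decrease and then telescope along the closed-loop trajectory generated by \eqref{eq:extended_closed_loop}. This mirrors the standard MPC stability argument (see, e.g., \cite[Th.~5.13]{Gruene2017}), adapted to our extended-state formulation.

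First, I would fix an arbitrary $\initstate \in \mathbb{X}_0$ and set $p_\mrm{s} \defi (\initstate,0) \in M_1$, so that the closed-loop sequence $\mathpzc{p}(t,p_\mrm{s})$ is well-defined and remains in $M$ for all $t \in \mathbb{N}_0$ by \lemmaref{lemma:closed_loop_existence}. Applying \lemmaref{lemma:lyap_decrease} at each closed-loop state yields
\begin{equation*}
    \tilde{V}_N(\mathpzc{p}(t+1,p_\mrm{s})) \leq \tilde{V}_N(\mathpzc{p}(t,p_\mrm{s})) - \tilde{\ell}(\mathpzc{p}(t,p_\mrm{s}))
\end{equation*}
for all $t \in \mathbb{N}_0$. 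Summing this inequality from $t=0$ to $t=T-1$ telescopes, which gives
\begin{equation*}
    \sum_{t=0}^{T-1} \tilde{\ell}(\mathpzc{p}(t,p_\mrm{s})) \leq \tilde{V}_N(p_\mrm{s}) - \tilde{V}_N(\mathpzc{p}(T,p_\mrm{s})) \leq \tilde{V}_N(\initstate,0),
\end{equation*}
where the last inequality exploits the nonnegativity of $\tilde{V}_N$, which itself follows from $\ell \geq 0$ and $V_\mrm{f} \geq 0$ together with the fact that $\tilde{V}_N$ is the optimal value of a sum of such terms.

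Finally, since the partial sums are monotone nondecreasing in $T$ (because $\tilde{\ell} \geq 0$) and uniformly bounded by $\tilde{V}_N(\initstate,0)$, the series converges and the same bound passes to the limit $T \to \infty$, yielding $\tilde{J}_\mrm{cl}(\initstate) \leq \widetilde{V}_N(\initstate,0)$. As $\initstate \in \mathbb{X}_0$ was arbitrary, the claim follows. The only mildly subtle point is ensuring that the one-step decrease of \lemmaref{lemma:lyap_decrease} is applicable at \emph{every} time step, including $t=0$; this is exactly what \lemmaref{lemma:closed_loop_existence} guarantees by showing $\mathpzc{p}(t,p_\mrm{s}) \in M$ for all $t$, so no extra argument is required.
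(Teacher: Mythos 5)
Your proof is correct and follows essentially the same route as the paper: telescoping the one-step decrease from \lemmaref{lemma:lyap_decrease} along the closed-loop trajectory (whose well-definedness is guaranteed by \lemmaref{lemma:closed_loop_existence}), dropping the terminal value-function term by nonnegativity, and passing to the limit via monotone boundedness of the partial sums. Your explicit justification of $\tilde{V}_N \geq 0$ and of the applicability of the decrease at every time step is a welcome, slightly more careful rendering of the same argument.
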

\begin{proof}
    We follow the proof of \cite[Th.~4.11]{Gruene2017}. 
    By \lemmaref{lemma:lyap_decrease}, the dynamic programming inequality
    $
    \tilde{V}_N(p) - \tilde{V}_N(g(p)) \geq \tilde{\ell}(p)
    $
    holds for all $p \in M$. 
    Then, under \lemmaref{lemma:closed_loop_existence}, considering a closed-loop trajectory $\mathpzc{p}(\cdot,p_\mrm{s})$ starting from $p_\mrm{s} = (\initstate,0) \in M_1$ and summing over $t$ gives for all $K \in \mathbb{N}$
    \begin{multline*}
        \sum_{t=0}^K \tilde{\ell} \left(
	       \mathpzc{p}(t,p_\mrm{s}) \right)
        \leq \sum_{t=0}^{K-1} \tilde{V}_N(\mathpzc{p}(t,p_\mrm{s})) - \tilde{V}_N(\mathpzc{p}(t+1,p_\mrm{s})) \\
        = \tilde{V}_N(\mathpzc{p}(0,p_\mrm{s})) - \tilde{V}_N(\mathpzc{p}(K,p_\mrm{s})) \leq \tilde{V}_N(p_\mrm{s}) = \tilde{V}_N(\initstate,0) .
    \end{multline*}
    Since $\ell$ is nonnegative, the term on the left is monotone increasing and bounded. 
    Therefore, for $K \to \infty$, it converges to $\tilde{J}_\mrm{cl}(\initstate)$. 
    The assertion follows because the right-hand side of the inequality is independent of $K$.
\end{proof}

Finally, we focus on stability. 
Similar results can be found in \cite{Pan2023,Goebel2017}. 
We show that the plant's state converges to the origin in closed-loop. 
To exclude that the state deviates too far from to origin before it eventually converges to it, we show that \algref{alg:reducedRHC} leads to a stable closed-loop system in the sense of \cite[Def.~10.24]{Gruene2017}. 
While this result does not establish asymptotic stability, it suffices within the context of this work. 
\begin{theorem}
    The solution $\mathpzc{p}(\cdot,p_\mrm{s}) = (\mathpzc{x}(\cdot,\initstate),\mathpzc{z}(\cdot,\initstate))$ of \eqref{eq:CFTOC_reduced} satisfies $\lim_{t \to \infty} \mathpzc{x}(t,\initstate) = 0$ for all $p_\mrm{s} = (\initstate,0) \in M_1$.
    In addition, consider any compact set $S \subseteq \mathbb{X}_0$ with $0 \notin S$. 
    There exists $\alpha_\mrm{S} \in \mathscr{K}$ such that $\| \mathpzc{x}(t,\initstate) \| \leq \alpha_\mrm{S}(\| \initstate \|)$ holds for all $\initstate \in S$ and $t \in \mathbb{N}_0$.
\end{theorem}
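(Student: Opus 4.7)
The plan is to combine the closed-loop cost bound from \theoref{theo:closed_loop_upper_bound} with the Lyapunov descent property of \lemmaref{lemma:lyap_decrease}, and then exploit positive definiteness of $Q$ to translate statements about $\tilde{V}_N$ into statements about $\|\mathpzc{x}(t,\initstate)\|$.

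For the convergence part, I would start from the inequality $\tilde{J}_\mrm{cl}(\initstate) \leq \widetilde{V}_N(\initstate,0) < \infty$ provided by \theoref{theo:closed_loop_upper_bound}. Since $Q \succ 0$, the stage cost admits the lower bound $\tilde{\ell}(p) = \ell(x,\tilde{\kappa}_N(x,\tilde{z})) \geq \lambda_{\min}(Q) \|x\|^2$ for every $p = (x,\tilde{z}) \in M$. Summability of $\tilde{\ell}(\mathpzc{p}(t,p_\mrm{s}))$ therefore forces summability of $\|\mathpzc{x}(t,\initstate)\|^2$, which in turn implies $\mathpzc{x}(t,\initstate) \to 0$ as $t \to \infty$.

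For the stability part, iterating \lemmaref{lemma:lyap_decrease} yields the monotonicity $\tilde{V}_N(\mathpzc{p}(t,p_\mrm{s})) \leq \tilde{V}_N(p_\mrm{s}) = \widetilde{V}_N(\initstate,0)$ for all $t \in \mathbb{N}_0$. Since the stage cost is a single nonnegative summand of the cost function $J_N$, the chain
\begin{equation*}
    \lambda_{\min}(Q)\|\mathpzc{x}(t,\initstate)\|^2 \leq \tilde{\ell}(\mathpzc{p}(t,p_\mrm{s})) \leq \tilde{V}_N(\mathpzc{p}(t,p_\mrm{s})) \leq \widetilde{V}_N(\initstate,0)
\end{equation*}
holds. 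It therefore suffices to bound $\widetilde{V}_N(\cdot,0)$ on the compact set $S$. Here I would use that, by the IA property of $(U,\sigma)$, for every $\initstate \in \mathbb{X}_0$ the pair $(\alpha(\initstate),1)$ with $U\alpha(\initstate) + \sigma(\initstate) \in \mathbb{U}^N(\initstate)$ is feasible for $\widetilde{\mathscr{P}}_N(\initstate,0)$, so that $\widetilde{V}_N(\initstate,0)$ is the optimal value of a quadratic program over a compact polyhedron with data depending continuously on $\initstate$. Standard parametric QP arguments then give upper semi-continuity of $\widetilde{V}_N(\cdot,0)$ on the compact set $\mathbb{X}_0$, hence a uniform bound $M_S \defi \sup_{\initstate \in S} \widetilde{V}_N(\initstate,0)/\lambda_{\min}(Q)$.

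Finally, since $S$ is compact and $0 \notin S$, the number $\varepsilon_S \defi \min_{x \in S}\|x\| > 0$ is well-defined, and the linear function $\alpha_S(r) \defi \sqrt{M_S}\, r / \varepsilon_S$ lies in $\mathscr{K}$ and satisfies $\alpha_S(\|\initstate\|) \geq \sqrt{M_S} \geq \|\mathpzc{x}(t,\initstate)\|$ for all $\initstate \in S$ and $t \in \mathbb{N}_0$. The anticipated main obstacle is justifying the boundedness of $\widetilde{V}_N(\cdot,0)$ on $\mathbb{X}_0$ cleanly; once the parametric QP argument is invoked, everything else reduces to elementary bookkeeping.
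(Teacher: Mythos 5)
Your proposal is correct, and its first half (summability of the stage cost from \theoref{theo:closed_loop_upper_bound} plus the lower bound $\tilde{\ell}(p) \geq \lambda_{\min}(Q)\|x\|^2$ forcing $\mathpzc{x}(t,\initstate)\to 0$) is essentially identical to the paper's argument. The second half reaches the same conclusion by a genuinely simpler packaging: where the paper constructs an explicit $\mathscr{K}_\infty$ upper bound $\tilde{\alpha}_2$ on $\tilde{V}_N(\cdot,0)$ (by maximizing $J_N$ over $B_l(0)\cap S$ and the compact input set, then patching the result below $l_{\min}$ to make it class $\mathscr{K}_\infty$) and composes it with $\tilde{\alpha}_1^{-1}$, you take a uniform constant bound $M_S$ on $S$ and rescale it into a linear class-$\mathscr{K}$ function using $\varepsilon_S = \min_{x\in S}\|x\| > 0$. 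Both routes exploit $0\notin S$ in the same essential way, and both satisfy the theorem as stated; the paper's comparison function is more informative (it genuinely varies with $\|\initstate\|$ beyond a rescaled constant), while yours minimizes the bookkeeping. The one place where you reach for heavier machinery than necessary is the justification of $\sup_{\initstate\in S}\tilde{V}_N(\initstate,0)<\infty$ via upper semicontinuity of a parametric QP value function: this can be avoided entirely, as in the paper, by noting that any feasible input sequence produces states in the compact set $\mathbb{X}$ and controls in the compact set $\mathbb{U}$, so $J_N$ (hence $\tilde{V}_N(\initstate,0)$, being bounded above by $J_N$ at the IA-feasible point) is uniformly bounded by a direct compactness argument, with no regularity of the value function required.
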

\begin{proof}
    We follow the proof of \cite[Proposition~10.25]{Gruene2017}. 
    First, recall from the proof of \theoref{theo:closed_loop_upper_bound} that 
    $
    \lim_{K \to \infty} \sum_{t=0}^K \tilde{\ell} \left(
	       \mathpzc{p}(t,p_\mrm{s}) \right) \leq \tilde{V}_N(p_\mrm{s}) 
    $
    holds. 
    Nonnegativity of $\tilde{\ell}$ then implies $\lim_{t \to \infty} \tilde{\ell}\left(
	       \mathpzc{p}(t,p_\mrm{s}) \right) = 0$.
    In general, we have $\tilde{\ell}(p) = x^\top Q x + \tilde{\kappa}_N(p)^\top R \tilde{\kappa}_N(p) \geq x^\top Q x \geq \lambda_\mrm{min}(Q) \| x \|^2$ for any $p = (x,\tilde{z}) \in M$, where we have used $Q \succ 0$.
    This implies $\mathpzc{x}(t,\initstate) \to 0$.     

    For the second part, we first show that $\tilde{V}_N(\cdot,0)$ is upper-bounded by a $\mathscr{K}_\infty$ function on $S$, for which we apply ideas taken from the proof of \cite[Proposition~5.7]{Gruene2017}. 
    The value $\tilde{V}_N(\initstate,0)$ exists for each $\initstate \in S$ by the initial admissibility assumption. 
    The positive value $l_\mrm{min} \defi \min_{\initstate \in S} \| \initstate \|$ exists by compactness of $S$. 
    We define $N_l \defi B_{l}(0) \cap S$, where $B_{l}(0)$ denotes the closed ball around $0$ with radius $l$. 
    Note that $N_l$ is compact, which implies that $N_l \times \mathbb{U}^N$ is compact. 
    Thus, for $l \geq l_\mrm{min}$, we can compute 
    \begin{multline*}
        \hat{\alpha}_2(l) \defi \max \{V_\mrm{f}(x_N) + \sum_{k=0}^{N-1} \ell(x_k,u_k) \mid \\
        x_0 \in N_l, \, (u_0, \dots, u_{N-1}) \in \mathbb{U}^N, \, {x}_{k+1} = A x_k+ B u_k
        \}
    \end{multline*}
    which implies $\tilde{V}_N(\initstate,0) \leq \hat{\alpha}_2(\| \initstate \|)$ for all $\initstate \in S$. 
    The resulting function $\hat{\alpha}_2$ is continuous and monotone increasing in $l$.
    Next, we pick an $\alpha_2^\prime \in \mathscr{K}_\infty$ with $\alpha_2^\prime(l_\mrm{min}) = \hat{\alpha}_2(l_\mrm{min})$ and any $\bar{\alpha}_2 \in \mathscr{K}_\infty$. 
    With this, we define $\tilde{\alpha}_2 \in \mathscr{K}_\infty$ by
    \begin{equation*}
        \tilde{\alpha}_2(l) \defi \begin{cases*}
            \alpha_2^\prime(l), & $l \in [0,l_\mrm{min})$, \\
            \hat{\alpha}_2(l) + \bar{\alpha}_2(l-l_\mrm{min}), & $l \geq l_\mrm{min}$,
        \end{cases*}
    \end{equation*}
    which satisfies $\tilde{V}_N(\initstate,0) \leq \tilde{\alpha}_2(\| \initstate \|)$ for all $\initstate \in S$.
    
    Secondly, we note that $S \times \{ 0 \} \subseteq M_1$ holds.
    Hence, using \lemmaref{lemma:lyap_decrease}, for any initial value $p_\mrm{s} = (\initstate,0) \in S \times \{ 0 \}$ we obtain 
    $
        \tilde{V}_N(\mathpzc{p}(t,p_\mrm{s})) \geq \tilde{\ell}(\mathpzc{p}(t,p_\mrm{s})) \geq \lambda_\mrm{min}(Q) \| \mathpzc{x}(t,\initstate) \|^2 \rotdefi \tilde{\alpha}_1(\| \mathpzc{x}(t,\initstate) \|)
    $ and $\tilde{V}_N(\mathpzc{p}(t,p_\mrm{s})) \leq \tilde{V}_N(p_\mrm{s})$ for all $t \in \mathbb{N}_0$. 
    This implies
    \begin{multline*}
        \| \mathpzc{x}(t,\initstate) \| \leq \tilde{\alpha}_1^{-1}(\tilde{V}_N(\mathpzc{p}(t,p_\mrm{s})))
        \leq \tilde{\alpha}_1^{-1}(\tilde{V}_N(p_\mrm{s})) \\
        \leq (\tilde{\alpha}_1^{-1} \circ \tilde{\alpha}_2)(\| \initstate \|) \rotdefi \alpha_\mrm{S}(\| \initstate \| ) 
    \end{multline*}
    for all $\initstate \in S$ and $t \in \mathbb{N}_0$.
\end{proof}

\begin{remark}
    From a computational perspective, in \algref{alg:reducedRHC}, once $\mathpzc{x}(t,\initstate) \in \mathbb{X}_\mrm{f}$, it is possible to replace $\tilde{\kappa}$ with the terminal controller $\kappa_\mrm{f}$, in case $\kappa_\mrm{f}$ is easier to evaluate.
    This is, \eg the case when the terminal ingredients are based on the unconstrained LQR controller. 
    Then, $\kappa_\mrm{f}(x) = K_\infty x$, \cf \cite{Borrelli2017}.
\end{remark}

\section{DATA-DRIVEN SUBSPACE DESIGN}
\label{sec:data_driven_design}

As introduced in \secref{sec:reduced_mpc}, it is possible to design a reduced-order and stable MPC scheme based on a low-dimensional representation of the high-dimensional optimizers $\mu_N(x)$. 
A major assumption, however, is that the pair $(U,\sigma) \in \stiefel{r}{d} \times \set{F}_N$ in \algref{alg:reducedRHC} is initially admissible. 
This has to be incorporated into the design of $U$ and $\sigma$.
To this end, we state an important observation for linear systems, \cf \cite{Goebel2017}.  
\begin{lemma}
    \label{lemma:initial_feasibility}
	Let $(U,\sigma) \in \stiefel{r}{d} \times \set{F}_N$.
	Then:
 \begin{equation*}
    \resizebox{\columnwidth}{!}{$(U,\sigma)  \tn{ is IA} \Leftrightarrow
		\forall \bar{x}_j \in X_0 \colon \exists \bar{\alpha}_j \in \R^r \colon U \bar{\alpha}_j + \sigma(\bar{x}_j) \in \mathbb{U}^N(\bar{x}_j).$}
 \end{equation*}
\end{lemma}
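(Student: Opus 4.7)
The plan is to prove each implication separately. The forward direction ($\Rightarrow$) is immediate: since $X_0 \subseteq \conv(X_0) = \mathbb{X}_0$, the IA property of $(U,\sigma)$ applied to each $\bar{x}_j \in X_0$ produces the required $\bar{\alpha}_j$.

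For the reverse direction ($\Leftarrow$), the strategy is to lift the per-vertex admissibility to a convex combination. Given an arbitrary $\initstate \in \mathbb{X}_0$, I would first write $\initstate = \sum_{j=1}^s \lambda_j \bar{x}_j$ with $\lambda_j \geq 0$ and $\sum_j \lambda_j = 1$, which is possible by the definition of $\mathbb{X}_0 = \conv(X_0)$. Then I would define $\alpha \defi \sum_{j=1}^s \lambda_j \bar{\alpha}_j$ and claim that $U \alpha + \sigma(\initstate) \in \mathbb{U}^N(\initstate)$. Using the affine form $\sigma(x) = \Gamma x + \xi$ of $\sigma \in \set{F}_N$ together with $\sum_j \lambda_j = 1$, a direct computation gives
\begin{equation*}
U \alpha + \sigma(\initstate) = \sum_{j=1}^s \lambda_j \bigl( U \bar{\alpha}_j + \sigma(\bar{x}_j) \bigr) = \sum_{j=1}^s \lambda_j \bar{z}_j ,
\end{equation*}
where $\bar{z}_j \defi U \bar{\alpha}_j + \sigma(\bar{x}_j) \in \mathbb{U}^N(\bar{x}_j)$ by hypothesis.

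The core step, and the one that demands the most care, is to verify that this convex combination of admissible input sequences is itself admissible for the corresponding convex combination of states, i.e.\ that the graph of the set-valued map $x \mapsto \mathbb{U}^N(x)$ is convex. I would argue this by exploiting the linearity of the pre-stabilized dynamics: unrolling the recursion shows that $\fun{x}_z(k,x)$ is a linear function of the pair $(x,z)$ (and hence so is $\fun{u}_z(k,x) = \kappa(\fun{x}_z(k,x)) + \fun{pr}_k(z)$). Consequently,
\begin{equation*}
\fun{x}_{\sum_j \lambda_j \bar{z}_j}(k, \initstate) = \sum_{j=1}^s \lambda_j \, \fun{x}_{\bar{z}_j}(k, \bar{x}_j),
\end{equation*}
and similarly for the control. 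Since $\mathbb{X}$, $\mathbb{U}$, and $\mathbb{X}_\mrm{f}$ are all convex (polyhedral), the convex combinations on the right-hand side lie in $\mathbb{X}$, $\mathbb{U}$, and $\mathbb{X}_\mrm{f}$ respectively, so $\sum_j \lambda_j \bar{z}_j \in \mathbb{U}^N(\initstate)$, as required.

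I do not expect any serious obstacle; the linearity of the trajectory map is an easy induction that could be relegated to a one-line remark, and the convexity of $\mathbb{X}, \mathbb{U}, \mathbb{X}_\mrm{f}$ is already part of the standing assumptions. The only subtlety worth flagging explicitly in the proof is the use of $\sum_j \lambda_j = 1$ to collapse the constant offset $\xi$, which is precisely why the restriction of $\sigma$ to the affine class $\set{F}_N$ is essential — without it, the reduction of IA to a finite check on the vertices $X_0$ would fail.
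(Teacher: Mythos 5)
Your proof is correct and follows the same route as the paper: the forward direction via $X_0 \subseteq \mathbb{X}_0$, and the reverse direction by taking the convex combination $\tilde{\alpha} = \sum_j \lambda_j \bar{\alpha}_j$ and invoking convexity. The paper compresses your detailed verification (linearity of the pre-stabilized trajectory map in $(x,z)$, convexity of $\mathbb{X}$, $\mathbb{U}$, $\mathbb{X}_\mrm{f}$, and the role of $\sum_j \lambda_j = 1$ in handling the affine offset) into the phrase ``it follows from convexity arguments,'' so you have simply made explicit what the paper leaves implicit.
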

\begin{proof}
    The $\Rightarrow$-direction is trivial, since by \assref{ass:initial_value}, $\bar{x}_j \in \mathbb{X}_0$ for all $j \in \{1,\dots,s\}$.
    For the $\Leftarrow$-direction, consider any $\initstate \in \mathbb{X}_0$.
    Since $\mathbb{X}_0 = \conv(X_0)$, there is $\theta_j \geq 0$ for $j \in \{1,\dots,s\}$ with $\sum_{j=1}^s \theta_j = 1$ such that $\initstate = \sum_{j=1}^s \theta_j \bar{x}_j$.
    It follows from convexity arguments that $U \tilde{\alpha} + \sigma(\initstate) \in \mathbb{U}^N(\initstate)$, with $\tilde{\alpha} = \sum_{j=1}^s \theta_j \bar{\alpha}_j$.
\end{proof}
\begin{remark}
	Checking if the pair $(U,\sigma)$ is initially admissible thus boils down to checking whether it offers admissible input sequences for a \emph{finite} number of states -- namely for the vertices of $\mathbb{X}_0$. 
\end{remark}

While initial admissibility is an assumption that \emph{must} be met in \algref{alg:reducedRHC}, we also seek to design $(U,\sigma)$ such that they are able to explain the most important patterns of the high-dimensional optimizers in $\set{D}$, so that the re-formulation from \eqref{eq:CFTOC} to \eqref{eq:CFTOC_reduced} is justified.

In the remainder of the paper, though, we focus on identifying the \emph{optimal basis} $U \in \stiefel{r}{d}$, \ie we fix the offset to a specific $\sigma_0 \in \set{F}_N$. 
A possible way to select the data $(\Gamma_0,\xi_0)$ that describes the offset $\sigma_0$ is
$\xi_0 = \sum_{i=1}^L \omega_i z_i$, with $\omega_i \geq 0$, $\sum_{i=1}^L \omega_i = 1$, and $\Gamma_0 = Z (X_\mrm{s})^\dag$,  
where $Z = \begin{bmatrix} z_1, \dots, z_L \end{bmatrix}$, $X_\mrm{s} = \begin{bmatrix} x_1 - \xi_0, \dots, x_L - \xi_0 \end{bmatrix}$ and $(\cdot)^\dag$ denotes the Moore-Penrose pseudoinverse \cite{Goebel2017}\footnote{In the light of \lemmaref{lemma:initial_feasibility}, the weights $\omega_i$ could be used to favor optimal input sequences associated to states close to the vertices of $\mathbb{X}_0$.}. 

As the offset is now fixed, we introduce -- based on the original data set \eqref{eq:systemData} -- a \emph{shifted} data set
$$
    \set{D}_{\mrm{s}} \defi \{
    z_i - \sigma_0(x_i) \mid (x_i,z_i) \in \set{D}
    \} ,
$$
and the \emph{shifted} admissible sets $\set{P}_j \defi \{ z - \sigma_0(\bar{x}_j) \mid z \in \mathbb{U}^N(\bar{x}_j) \}$ for $j \in \{1,\dots,s\}$.

We start our discussion by recalling an important fact concerning suboptimality and admissibility of \eqref{eq:CFTOC_reduced} from \cite[Lemma~1]{Schurig2023}. 
It states that whether $\widetilde{\mathscr{P}}_N(x_0,0)$ can be solved with the pair $(U,\sigma_0) \in \stiefel{r}{d} \times \set{F}_N$ solely depends on the \emph{subspace} spanned by the columns of $U$, and not directly on $U$ itself.
To see that, consider $\widetilde{U} = U R \in \stiefel{r}{d}$ with $R \in \og{r}$, where $\og{r}$ 
denotes the orthogonal group of dimension $r$.
Note that $\Span(\widetilde{U}) = \Span(U)$.
Solving $\widetilde{\mathscr{P}}_N(x,0)$ with either $(U,\sigma_0)$ or $(\widetilde{U},\sigma_0)$ leads to the same solution $\widetilde{V}_N(x,\tilde{z})$
, since the part $U \alpha^\star$ of the optimizer can also be represented by $\widetilde{U}$ via $\widetilde{U} \tilde{\alpha}^\star = \widetilde{U} (R^\top \alpha^\star) = U \alpha^\star$. 

This gives rise to the \emph{equivalence class}
$
	\eqclass{U} \defi \{ U R \in \stiefel{r}{d} \mid R \in \og{r} \} 
$,
and motivates the introduction of the so-called \emph{Grassmann manifold} \cite{Boumal2023, Edelman1998, Bendokat2020} as the quotient space
$
	\grass{r}{d} \defi \stiefel{r}{d} / \og{r} = \{ \eqclass{U} \mid U \in \stiefel{r}{d} \} 
$.

Intuitively, the Grassmann manifold $\grass{r}{d}$ is the set of all $r$-dimensional subspaces of $\R^d$. 
A subspace $\eqclass{U} \in \grass{r}{d}$ can be equivalently expressed by the unique orthogonal projector $P = \pi(U) \defi U U^\top$ onto it \cite{Bendokat2020}; see \cite{Schurig2023} for a more elaborate discussion.

\begin{figure}
    \centering
    \begin{tikzpicture}[line width=1.25pt,>=stealth]
    \draw[->] (0,0) -- node[pos=1, below] {\footnotesize $z^1$} +(2.5, 0);
    \draw[->] (0,0) -- node[pos=1, left] {\footnotesize $z^2$} +(0, 2.5);
    
    \node at (0.75,2.25)  (z) [circle, draw, inner sep=0.75, fill=white] {};
    \node at (z) [left,CCPSgray1] {\footnotesize{$\delta_i$}};
    
    \draw[-,CCPSredBound] (0,0) -- node[pos=1,right] {\footnotesize${\Span(U)}$} (2.5,2.5);
    \node at (1.5, 1.5)  (zp) [circle, draw, inner sep=0.75, fill=white] {};
    \node at ([xshift=0.75pt]zp.east) [below,CCPSredBound] {\footnotesize{$\delta_{\mrm{p},i}$}};
    
    \draw[->,CCPSredBound,dashed] (z) -- node[right,pos=0.4] {\footnotesize $P \delta_i$} ([xshift=-0.5pt,yshift=0.5pt]zp.north west);
    
    \draw[->] (3.75,0.75) -- ++(2.25,0) node [below] {$\R$};
    \draw[CCPSturquoise1] (5.5,0.75) node[below] {\footnotesize $\alpha_i$};
    \draw[] (5.5,0.75) node[name=alpha] {\tiny $|$};
    \draw[CCPSblue1] (4,0.75) node[below] {\footnotesize $\widetilde{\alpha}_i$};
    \draw[] (4,0.75) node[name=alphatilde] {\tiny $|$};
    
    \draw[->,CCPSturquoise1] ([xshift=2pt]zp.east) -- node[above] () {\footnotesize $U^\top \delta_i$} (5.45,0.8);
    \draw[->,CCPSblue1] ([xshift=2pt]zp.east) -- node[below] () {\footnotesize $\widetilde{U}^\top \delta_i$} (3.95,0.8);
\end{tikzpicture}
    \caption{Illustration of the subspace approach for $d=2$, $r=1$.} \vspace{-3ex}
    \label{fig:projection}
\end{figure}
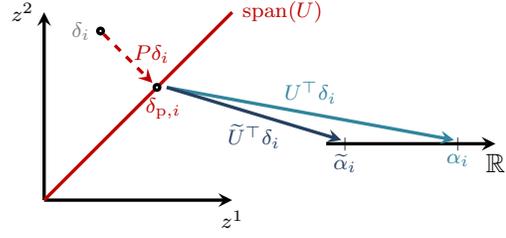

The upcoming discussion is supported by \figref{fig:projection}.
We consider $\delta_i \in \set{D}_\mrm{s}$, and an arbitrary subspace $\eqclass{U} \cong P = \pi(U)$.
Firstly, we can \emph{project} $\delta_i$ onto $\Span(U)$ by virtue of the projector $P$.
As we can see, $\delta_{\mrm{p},i} \defi P \delta_i$ is the point in $\Span(U)$ with the minimal Euclidean distance to $\delta_i$.
The point $\delta_{\mrm{p},i} \in \Span(U)$ also has a {low-dimensional representation} $\alpha_i \defi U^\top \delta_i \in \R^r$.
We refer to $\alpha_i$ as the \emph{latent variable} for $\delta_i$ with respect to $U$.
Note that $U \alpha_i = \delta_{\mrm{p},i}$.

An important observation, though, is that the latent variable depends on the \emph{basis} that is chosen to span the subspace, and not on the subspace itself.
Consider $\widetilde{U} = U R \in \eqclass{U}$ with $\widetilde{U} \neq U$.
Surely, there exists $\tilde{\alpha}_i \in \R^r$ such that $\widetilde{U} \tilde{\alpha}_i = \delta_{\mrm{p},i}$, but we have the relation $\tilde{\alpha}_i = R^\top \alpha_i \neq \alpha_i$.

Because the feasibility of $\widetilde{\mathscr{P}}_N(x,0)$ solely depends on the subspace, the latent variable is not well suited to represent a point in $\Span(U)$ in the design phase of the subspace since it loses its meaning and interpretation under a change of basis.
Hence, we suggest the following design task. 
\begin{problem}
    Design a subspace $\eqclass{U} \in \grass{r}{d}$ such that
    \begin{enumerate}
        \item[(i)] The subspace $\eqclass{U}$ intersects all shifted admissible sets: 
        $
            \forall j \in \{1,\dots,s\} \colon \exists \tilde{z}_j \in \Span(U) \colon \tilde{z}_j \in \set{P}_j 
        $.
        \item[(ii)] The subspace $\eqclass{U}$ minimizes the Euclidean distance of $\delta_i$ to $\Span(U)$ for all $\delta_i \in \set{D}_\mrm{s}$.
    \end{enumerate}
    \label{prob:subspace_design}
\end{problem}
\pagebreak
\begin{remark}
    More elaborate optimality requirements than (ii) are possible on $\grass{r}{d}$. We restrict our analysis to this simple one for the sake of an easy interpretation and exposition.
\end{remark}

\subsection{Design of the optimal subspace}

We present an optimization problem over $\grass{r}{d}$ as a possible solution to \probref{prob:subspace_design}, and compare our approach to an existing one that is based on convex programming.

\subsubsection{Riemannian approach}
We aim to minimize the distance between the points in $\set{D}_\mrm{s}$ and the subspace $\eqclass{U} \in \grass{r}{d}$. 
Following our previous discussion, we directly consider the {projection} $P \delta_i$ of $\delta_i$ onto $\Span(U)$ via the projector $P = \pi(U)$. 
This retains the interpretability of the formulation under a change of basis $\widetilde{U} \in \eqclass{U}$.
Thus, we introduce 
$
    f \colon \grass{r}{d} \to \R_{\geq 0}, \;
    P \mapsto \sum_{i=1}^L \| \delta_i - P \delta_i \|^2
$
as objective function, \cf \figref{fig:projection}. 

We can now turn to the initial admissibility constraint, which is illustrated in \figref{fig:demo_low_dim_rep}.
We show the shifted admissible set $\set{P}_j$ for one $j \in \{1, \dots, s\}$ and a subspace $\eqclass{U} \cong P= \pi(U) \in \grass{r}{d}$. 
We again rely on the concept of {projection}. 
For admissibility, we consider the maximum volume ellipsoid center $\bar{\delta}_j$ of $\set{P}_j$ \cite[Sec.~8.5.2]{Boyd2004}; this center can also be seen in \figref{fig:demo_low_dim_rep}. 
The idea is to demand that the point on $\Span(U)$ with the smallest Euclidean distance to the center $\bar{\delta}_j$ of $\set{P}_j$ should still be in $\set{P}_j$. 
As discussed before, $P \bar{\delta}_j \in \Span(U)$ is exactly this point, so we require $P \bar{\delta}_j \in \set{P}_j$ for all $j \in \{1,\dots,s\}$. 
In conclusion, to tackle \probref{prob:subspace_design}, we propose to solve
\begin{align}
    &\min_{P \in \grass{r}{d}} \, f(P) &
    &\mrm{s.t.} & 
    &\forall j \in \{1,\dots,s\} \colon P \bar{\delta}_j \in \set{P}_j.
    \label{eq:riemannian_formulation} 
\end{align}
\begin{corollary}
	Assume that $P = \pi(U) \in \grass{r}{d}$ is a feasible point of \eqref{eq:riemannian_formulation}. 
	Then, the pair $(U,\sigma_0) \in \stiefel{r}{d} \times \set{F}_N$ is initially admissible.
\end{corollary}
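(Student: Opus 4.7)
My plan is to invoke \lemmaref{lemma:initial_feasibility}, which reduces initial admissibility from a statement over the full compact set $\mathbb{X}_0$ to a statement over the finite set of vertices $X_0 = \{\bar{x}_1,\dots,\bar{x}_s\}$. Thus it suffices to produce, for each $j \in \{1,\dots,s\}$, a coefficient vector $\bar{\alpha}_j \in \R^r$ with $U \bar{\alpha}_j + \sigma_0(\bar{x}_j) \in \mathbb{U}^N(\bar{x}_j)$.

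Next I would use feasibility of $P = \pi(U) = U U^\top$ in \eqref{eq:riemannian_formulation}, which says $P \bar{\delta}_j \in \set{P}_j$ for every $j$. Since $P \bar{\delta}_j = U (U^\top \bar{\delta}_j)$, the point $P \bar{\delta}_j$ lies in $\Span(U)$, so the natural candidate is $\bar{\alpha}_j \defi U^\top \bar{\delta}_j$, which satisfies $U \bar{\alpha}_j = P \bar{\delta}_j$.

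By definition of the shifted admissible set $\set{P}_j = \{ z - \sigma_0(\bar{x}_j) \mid z \in \mathbb{U}^N(\bar{x}_j) \}$, the membership $P \bar{\delta}_j \in \set{P}_j$ is equivalent to $P \bar{\delta}_j + \sigma_0(\bar{x}_j) \in \mathbb{U}^N(\bar{x}_j)$. Substituting the candidate coefficients yields $U \bar{\alpha}_j + \sigma_0(\bar{x}_j) \in \mathbb{U}^N(\bar{x}_j)$ for each vertex, and applying \lemmaref{lemma:initial_feasibility} then delivers initial admissibility of $(U,\sigma_0)$.

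I do not expect any real obstacle: the result is essentially a direct unwinding of definitions, combined with the vertex reduction of \lemmaref{lemma:initial_feasibility}. The only subtlety worth flagging explicitly is that the construction is invariant under the choice of representative $U \in \eqclass{U}$, which is exactly the reason \eqref{eq:riemannian_formulation} is formulated in terms of the projector $P$ rather than $U$; any $\widetilde{U} = U R$ with $R \in \og{r}$ yields the same $P \bar{\delta}_j$ and thus an admissible latent variable $\widetilde{\alpha}_j = R^\top \bar{\alpha}_j$.
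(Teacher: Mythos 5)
Your proof is correct and follows essentially the same route as the paper: take $\bar{\alpha}_j = U^\top \bar{\delta}_j$ so that $U\bar{\alpha}_j = P\bar{\delta}_j \in \set{P}_j$, unwind the definition of $\set{P}_j$ to obtain admissibility at each vertex, and conclude via \lemmaref{lemma:initial_feasibility}. The paper's own proof is just a terser version of this (it leaves the appeal to \lemmaref{lemma:initial_feasibility} implicit), so your write-up is, if anything, slightly more complete.
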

\begin{proof}
    For each $j \in \{1,\dots,s\}$, we have $U \bar{\alpha}_j \in \set{P}_j$ with $\bar{\alpha}_j = {U}^\top \bar{\delta}_j$ by assumption.
    Then, $U \bar{\alpha}_j + \sigma_0(\bar{x}_j) \in \mathbb{U}^N(\bar{x}_j)$ by definition of $\set{P}_j$.
\end{proof}

\subsubsection{Comparison to a Euclidean approach}

In \cite{Goebel2017}, the authors proposed an approach similar to \eqref{eq:riemannian_formulation}. 
We refer to the approach in \cite{Goebel2017} as an \qq{Euclidean approach}, since the rigorous viewpoint of optimizing over subspaces was not pursued. 
We present a slightly adapted version of the Euclidean approach for comparison with our Riemannian approach, to highlight its conceptual advantages. 

In the Euclidean approach, the idea is to \emph{alternately} optimize the basis and the latent representation of points in $\set{D}_\mrm{s}$ and $\set{P}_j$.
Starting from an initial subspace $\eqclass{U_{(0)}} \in \grass{r}{d}$, we construct $\eqclass{U_{(k+1)}} \in \grass{r}{d}$ from $\eqclass{U_{(k)}}$ in an iterative fashion. 
Once a subspace $[U_{(k)}]$ is available, we seek to find a latent representation $\bar{\alpha}_j^{(k)}$ for a point on $\Span(U_{(k)})$ which is also in $\set{P}_j$, or at least close to it. 
Following our previous discussion, we again choose the point on $\Span(U_{(k)})$ with the smallest Euclidean distance to the maximum volume ellipsoid center $\bar{\delta}_j$ of $\set{P}_j$.
This choice leads to the latent variables $\bar{\alpha}_j^{(k)} \defi U_{(k)}^\top \bar{\delta}_j$. 
The convex program
\begin{align}
    \hat{U}_{(k+1)} = &\arg \min_{U \in \R^{d \times r}} \, f^\mrm{eucl}(U) &
    &\mrm{s.t.} & 
    &U \bar{\alpha}_j^{(k)} \in \set{P}_j
    \label{eq:euclidean_formulation} 
\end{align}
is solved, with $j \in \{1,\dots,s\}$. 
Then, a subspace $\eqclass{U_{(k+1)}} \in \grass{r}{d}$ can be obtained by orthogonalizing the columns of $\hat{U}_{(k+1)}$, the latent variables $\bar{\alpha}_j^{(k+1)}$ can be computed as described above and the iteration counter $k$ can be increased. 
Here, $f^\mrm{eucl}$ stands for any convex cost function; we concentrate on the feasibility of \eqref{eq:euclidean_formulation}.

The above procedure enjoys the benefit of entirely relying on the (repetitive) solution of a convex program, for which powerful tools exist. 
However, we argue that our subspace perspective offers further potential \emph{conceptually}. 
The key difference to \eqref{eq:riemannian_formulation} is that \eqref{eq:euclidean_formulation} relies on the latent representation $\bar{\alpha}_j^{(k)}$ of $\bar{\delta}_j$ \emph{with respect to $U_{(k)}$}. 
We have already discussed that the latent representation loses its meaning under a change of basis. 
As we see next in a simple example, it is also not suited for comparing high-dimensional points among different subspaces.

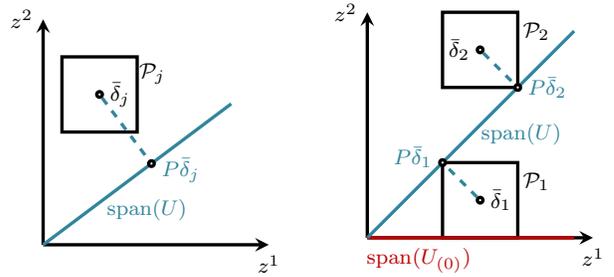
\begin{figure} 
    \centering
    
    \subfloat[Riemannian approach.]{
        \begin{tikzpicture}[line width=1.25pt,>=stealth,scale=0.5]
    \draw[->] (0,0) -- node[pos=1, below] {\footnotesize $z^1$} +(6, 0);
    \draw[->] (0,0) -- node[pos=1, left] {\footnotesize $z^2$} +(0, 6);
    \draw [] (0.5,3) rectangle ++(2,2);
    \node at (2.9,4) [above] {\footnotesize $\set{P}_j$};
    \node at (1.5,4)  (zc) [circle, draw, inner sep=0.75, fill=white] {};
    \node at (zc) [right] {\footnotesize{$\bar{\delta}_j$}};
    
    \draw[-,CCPSturquoise1] (0,0) -- node[pos=0.25,right,xshift=2.5pt] {\footnotesize${\Span(U)}$} (5,3.75);
    \node at (2.88, 2.16)  (z0) [circle, draw, inner sep=0.75, fill=white] {};
    \node at (z0) [right,yshift=-2.5pt,CCPSturquoise1] {\footnotesize{$P \bar{\delta}_j$}};
    
    \draw[-,CCPSturquoise1,dashed] (zc) -- (z0);
\end{tikzpicture}
        \label{fig:demo_low_dim_rep}
    }
    \subfloat[Infeasibility of the Euclidean approach.]{
        \begin{tikzpicture}[line width=1.25pt,>=stealth,scale=0.5]
    \draw[->] (0,0) -- node[pos=1, below] {\footnotesize $z^1$} +(6, 0);
    \draw[->] (0,0) -- node[pos=1, left] {\footnotesize $z^2$} +(0, 6);
    
    \draw [] (2,0) rectangle ++(2,2);
    \node at (4.5,1.5) {\footnotesize $\set{P}_1$};
    \node at (3,1)  (zc1) [circle, draw, inner sep=0.75, fill=white] {};
    \node at (zc1) [right] {\footnotesize{$\bar{\delta}_1$}};

    \draw [] (2,4) rectangle ++(2,2);
    \node at (4.5,5.5) {\footnotesize $\set{P}_2$};
    \node at (3,5)  (zc2) [circle, draw, inner sep=0.75, fill=white] {};
    \node at (zc2) [left] {\footnotesize{$\bar{\delta}_2$}};
    
    \draw[-,CCPSredBound] (0,0) -- node[pos=0.25,below] {\footnotesize${\Span(U_{(0)})}$} (5.5,0);
    \draw[-,CCPSturquoise1] (0,0) -- node[pos=0.5,right] {\footnotesize${\Span(U)}$} (5.5,5.5);
    
    \node at (2, 2)  (z1) [circle, draw, inner sep=0.75, fill=white] {};
    \node at (z1) [left,CCPSturquoise1,yshift=2pt] {\footnotesize{$P \bar{\delta}_1$}};
    \node at (4, 4)  (z2) [circle, draw, inner sep=0.75, fill=white] {};
    \node at (z2) [right,CCPSturquoise1] {\footnotesize{$P \bar{\delta}_2$}};
    
    \draw[-,CCPSturquoise1,dashed] (zc1) -- (z1);
    \draw[-,CCPSturquoise1,dashed] (zc2) -- (z2);
\end{tikzpicture}
        \label{fig:demo_low_dim_rep_eucl}
    }
    \caption{Illustration of the admissibility problem.} \vspace{-3ex}
    \label{fig:demo_ad}
\end{figure}

\begin{example}
    We assume $d = s = 2$, and consider the polyhedral sets $\set{P}_1 = \{ z \in \R^2 \mid (2,0) \leq z \leq (4,2) \}$ and $\set{P}_2 = \{ z \in \R^2 \mid (2,4) \leq z \leq (4,6) \}$. 
    They have the maximum volume ellipsoidal centers $\bar{\delta}_1 = (3,1)$ and $\bar{\delta}_2 = (3,5)$, respectively. 
    This is illustrated in \figref{fig:demo_low_dim_rep_eucl}.
    We consider an initial subspace spanned by $U_{(0)}^\top = \begin{bmatrix} 1 & 0 \end{bmatrix}$, which implies $\bar{\alpha}_1^{(0)} = \bar{\alpha}_2^{(0)} = 3$. 
    The optimization variable in \eqref{eq:euclidean_formulation} is $U^\top = \begin{bmatrix} u_1 & u_2 \end{bmatrix}$, and for our specific initial subspace the constraints for $u_2$ are given by $0 \leq 3 u_2 \leq 2 \, \land \, 4 \leq 3 u_2 \leq 6$, \ie \eqref{eq:euclidean_formulation} is infeasible.
    Note that the subspace spanned by $U^\top = \frac{1}{\sqrt{2}} \begin{bmatrix} 1 & 1 \end{bmatrix}$ does intersect both sets, \cf \figref{fig:demo_low_dim_rep_eucl}, so that $\eqclass{U}$ is a possible solution to \probref{prob:subspace_design}. 
    This fact is just not captured by the Euclidean formulation \eqref{eq:euclidean_formulation}; it is, however, captured by the Riemannian formulation \eqref{eq:riemannian_formulation}. 
\end{example}
While this example is admittedly constructed, it shows the major flaw of the Euclidean approach: the latent variables $\bar{\alpha}_j^{(k)}$ \emph{only} have a meaning in the coordinates given by $U_{(k)}$, and not necessarily in the coordinates given by a different~$U$.

\begin{remark}
    While \eqref{eq:riemannian_formulation} enjoys the aforementioned conceptional advantages, it comes at the price of having to optimize over the Riemannian manifold $\grass{r}{d}$.
    This can be dealt with by using tailored algorithms \cite{Absil2008, Boumal2023}; see also \cite{Schurig2023}.
    
    Further, we consider \emph{additional} constraints of the form $P \bar{\delta}_j \in \set{P}_j$, which can be tackled by applying methods from \cite{Liu2020}.
    A detailed discussion, though, is out of scope here.
\end{remark}

\section{SIMULATION RESULTS}
\label{sec:example}

We consider the system $(\dot{x}^1, \dot{x}^2) = (x^2,x^1+u)$, which is obtained by linearizing a simple inverted pendulum model around its upright position. 
Here, $x^1$ and $x^2$ describe the angular  position and velocity, and $u$ the applied torque. 
We discretize with sampling time $T_\mrm{s} = \valunit{0.1}{s}$ to obtain a discrete-time model.
We impose the state and input constraints $\mathbb{X} = \{ x \in \R^2 \mid | x^1 | \leq 1, \, | x^2 | \leq 0.35 \}$ and $\mathbb{U} = \{ u \in \R \mid | u | \leq 1 \}$.
The objective is to stabilize the linearized system in the origin.
Hence, we apply an MPC scheme with stage cost 
$\ell(x,u) = x^\top x + 0.1 u^2$. 
The unconstrained LQR controller $K_\infty$ is used to derive the terminal cost and terminal set.
The MPT3 toolbox and YALMIP are used to formulate \eqref{eq:CFTOC}.

Using \algref{alg:reducedRHC}, we seek to design a reduced MPC scheme that features a subspace of dimension $r=2$.
As our baseline MPC, we consider $\mathscr{P}_{13}(x)$ in \eqref{eq:CFTOC} for the prediction horizon $N = 13$. 
We use $K_\infty$ to pre-stabilize the system.

We aim to achieve the same domain of attraction that the full-order MPC controller would have for a prediction horizon of $N_\mrm{desired} = 12$, \ie we set the initial set to $\mathbb{X}_0 = \set{X}_{12}$.
We obtain the data set $\set{D}$ in \eqref{eq:systemData} by sampling $L=450$ states $x_i \in \mathbb{X}_0 \setminus \mathbb{X}_\mrm{f}$, and computing the corresponding optimal input sequences $z_i = \mu_{13}(x_i) \in \mathbb{U}^{13}(x_i)$.
 \figref{fig:simple_pendulum_setup} illustrates the situation.

\begin{figure}
    \centering
    \input{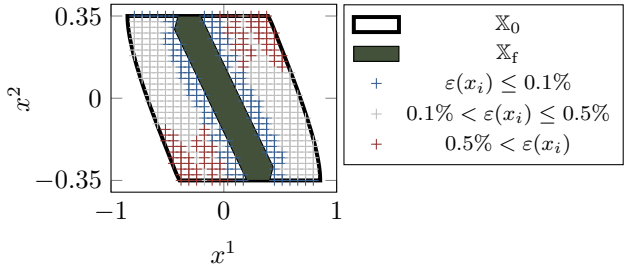}
    \caption{Initial set and data points for pendulum example.} \vspace{-3ex}  
    \label{fig:simple_pendulum_setup}
\end{figure}

We determine the fixed offset $\sigma_0 \in \set{F}_{13}$ as described in \secref{sec:data_driven_design} with $\omega_i = 1/L$, and apply our Riemannian framework \eqref{eq:riemannian_formulation} to design a subspace $\eqclass{U^\star} \in \grass{2}{13}$ such that the pair $(U^\star,\sigma_0)$ is initially admissible.
For this task, we implement \cite[Alg.~1]{Liu2020}, and the optimization problem(s) on the Grassmann manifold are solved with Manopt \cite{Boumal2014b}.
We note that the {Euclidean} counterpart in \eqref{eq:euclidean_formulation} was infeasible in the first iteration $k=0$.

In \figref{fig:simple_pendulum_setup}, we report the relative difference $\varepsilon \colon \mathbb{X}_0 \to \R$ for the closed-loop cost $\tilde{J}_\mrm{cl}$ with respect to the full-order MPC controller with $N=12$.
The mean of $\varepsilon$ over the shown grid is $0.31 \%$, with a standard deviation of $0.34 \%$.

To conclude, with a two-dimensional subspace, we designed a reduced MPC scheme which uses three degrees of freedom in closed-loop to guarantee stability and feasibility. 
The scheme has the same domain of attraction as the full-order controller with a prediction horizon of twelve, and only a minor decrease in closed-loop performance.

\section{CONCLUSIONS}
\label{sec:conclusion}

We introduce a reduced-order MPC scheme for which stability and feasibility can be established. 
The required subspace is computed via constrained optimization over a Riemannian manifold.  
The design process ensures initial feasibility of the MPC problem in the reduced space, which paves the way for the closed-loop properties.

There are many possible next developments. 
In this work, we have fixed all design parameters but the subspace with a heuristic, so approaches can be discussed that take all parameters -- such as the offset or the pre-stabilizing feedback gain -- and their interconnections into account.
Moreover, the current design process does not consider closed-loop information. 
Reinforcement learning or Bayesian optimization are possible tools for tackling the outlined challenges.









\bibliographystyle{ieeetr}
\bibliography{./bib/literature}

\end{document}